\newenvironment{bprooftree}
  {\leavevmode\hbox\bgroup}
  {\DisplayProof\egroup}
\def\SPSB#1#2{\rlap{\textsuperscript{{#1}}}\SB{#2}}
\def\SP#1{\textsuperscript{{#1}}}
\def\SB#1{\textsubscript{{#1}}}
\def\IT#1{\textit{{#1}}}
\def \BF #1{\textbf{{#1}}}
\newtheorem{prop}{Proposition}[section]
\newtheorem{mydef}{Definition}[section]
\newtheorem{notation}{Notation}[section]
\newtheorem{example}{Example}[section]
\title{\normalfont\spacedallcaps{Sequences of Rewrites: A Categorical Interpretation}} 
\author{
   Arthur Ramos\SP{*}\\
  \texttt{afr@cin.ufpe.br}
  \and
  Ruy J. G. B. de Queiroz\SP{*}\\
  \texttt{ruy@cin.ufpe.br}
\and
  Anjolina G. de Oliveira\SP{*}\\
  \texttt{ago@cin.ufpe.br}
}
\date{} 
\begin{document}

\maketitle 


{\let\thefootnote\relax\footnotetext{* \textit{Centro de Informática, Universidade Federal de Pernambuco, Recife-PE, Brazil}}}


\renewcommand{\sectionmark}[1]{\markright{\spacedlowsmallcaps{#1}}} 
\lehead{\mbox{\llap{\small\thepage\kern1em\color{halfgray} \vline}\color{halfgray}\hspace{0.5em}\rightmark\hfil}} 

\pagestyle{scrheadings} 


\begin{abstract} 

In Martin-L\"of's Intensional Type Theory, identity type is a heavily used and studied concept. The reason for that is the fact that it is responsible for the recently discovered connection between Type Theory and Homotopy Theory. The main problem with identity types, as originally formulated, is that they are complex to understand and use. Using that fact as motivation, a much simpler formulation for the identity type was initially proposed by de Queiroz \& Gabbay (1994) \cite{Ruy4} and further developed by de Queiroz \& de Oliveira (2013) \cite{Ruy1}. In this formulation, an element of an identity type is seen as a sequence of rewrites (or computational paths). Together with the logical rules of this new entity, there exists a system of reduction rules between sequence of rewrites called \textit{LND\textsubscript{EQS}-RWS}. This system is constructed using the labelled natural deduction (i.e.\ Prawitz' Natural Deduction plus \emph{derivations-as-terms}) and is responsible for establishing how a sequence of rewrites can be rewritten, resulting in a new sequence of rewrites. In this context, we propose a categorical interpretation for this new entity, using the types as objects and the rules of rewrites as morphisms. Moreover, we show that our interpretation is in accordance with some known results, like that types have a groupoidal structure. We also interpret more complicated structures, like the one formed by a rewrite of a sequence of rewrites.

\bigskip

\noindent {\bf Keywords:} Sequence of rewrites, homotopy type theory, category theory, identity type, denotational semantics.
\end{abstract}


\section{Introduction}

Since the discovery of Univalent Models in 2005 by Vladimir Voevodsky \cite{Vlad1}, Homotopy Type Theory became an area of intensive research. One of the most important results that arose from that research was the connection between Martin-Löf's Intensional Type Theory and Homotopy Theory \cite{Steve1}. The key concept for such a connection is that of an identity type and, for this reason, it is a heavily used and studied concept of that theory. Despite having a clear  homotopical interpretation, the intensional identity type, which was originally defined by Martin-Lôf in \cite{Martin1}, has some rules that are difficult to understand and use. In particular, this is the case for the \textit{Id\textsuperscript{int}-Elimination} rule. In the words of \cite{Ruy1}, " \textit{Id\textsuperscript{int}-Elimination} is too heavily loaded with (perhaps unnecessary) information". A simplification of this rule is a desirable result, since the principle of path induction, which is used as a tool in a great amount of proofs, is a direct consequence of that rule \cite{hott}. To solve this problem, a new entity in Type Theory was proposed by \cite{Ruy1}, called the sequence of rewrites. The objective was to define the identity type using these sequence of rewrites, creating an elimination rule that still captures the intensionality of the identity type and that has much less information than the  \textit{Id\textsuperscript{int}-Elimination} rule. The following statements summarize this idea \cite{Ruy1}: 
\begin{quotation}\noindent
"It seems that an alternative formulation of propositional equality within the functional interpretation, which will be a little more elaborate than the extensional \textit{Id\SPSB{ext}{A}-type}, and simpler than the intensional \textit{Id\SPSB{int}{A}-type}, could prove more convenient from the point of view of the 'logical interpretation' ... If, on the one hand, there is an overexplicitation of information in  \textit{Id\SPSB{int}{A}}, on the other hand, in  \textit{Id\SPSB{ext}{A}} we have a case of underexplicitation. With the formulation of a proof theory for equality via labelled natural deduction we wish to find a middle ground solution between those two extremes."
\end{quotation}

\noindent Although important in formalizing this new entity and its many properties using proof theory, a denotational semantics for it was not given in \cite{Ruy1}. For this reason, the main objective of this article is to give a mathematical interpretation of a sequence of rewrites. This interpretation will use the tools of Category Theory. Nonetheless, it is impossible to understand the mathematical interpretation without understanding the computational aspects of the sequence of rewrites. Therefore, we need to make a brief effort to understand the definition of it and some of the essential proof theoretic rules. 

 

\section{Sequence of Rewrites}

The concept of the sequence of rewrites arises from the computational interpretation of the identity type. In type theory, a witness \IT{p} of \IT{Id\SB{A}(a,b)} is means that \IT{p} is a proof that establishes that \IT{a} and \IT{b}, both elements  of the type \IT{A}, are propositionally equal \cite{hott}. The homotopical interpretation of the identity type is responsible for the rise of Homotopy Type Theory. According with this interpretation, the witness \IT{p} can be seen as a homotopical path between the points \IT{a} and \IT{b} within a topological space \IT{A}. This simple interpretation is responsible for groundbreaking results. Most of these were explored in \cite{hott, Steve1}. Inspired by this interpretation, the concept of sequences of rewrites or computational paths was proposed in \cite{Ruy1}. A sequence of rewrites is a computational path \IT{s} between \IT{a} and \IT{b}.


\subsection{Equality Axioms}

To explain the exact meaning of a computational path, we need to look at the equality axioms for Type Theory. Based on the equality theory for the lambda calculus together with $\beta$-reductions and $\eta$-reductions, the so-called $\lambda$$\beta$$\eta$-equality, the following axioms are formalized in type theory \cite{Ruy1}:

\bigskip

\noindent
\begin{bprooftree}
\hskip -0.3pt
\alwaysNoLine
\AxiomC{\IT{N : A}}
\AxiomC{\IT{[x : A]}}
\UnaryInfC{\IT{M : B}}
\alwaysSingleLine
\LeftLabel{(\IT{$\beta$}) \quad}
\BinaryInfC{\IT{($\lambda$x.M)N = M[N/x] : B}}
\end{bprooftree}
\begin{bprooftree}
\hskip 11pt
\alwaysNoLine
\AxiomC{\IT{[x : A]}}
\UnaryInfC{\IT{M = M' : B}}
\alwaysSingleLine
\LeftLabel{\IT{($\xi$)} \quad}
\UnaryInfC{\IT{$\lambda$x.M = $\lambda$x.M' : ($\Pi$x : A)B}}
\end{bprooftree}

\bigskip

\noindent
\begin{bprooftree}
\hskip -0.5pt
\AxiomC{\IT{M : A}}
\LeftLabel{\IT{($\rho$)} \quad}
\UnaryInfC{\IT{ M = M : A}}
\end{bprooftree}
\begin{bprooftree}
\hskip 58pt
\AxiomC{\IT{ M = M' : A}}
\AxiomC{\IT{N : ($\Pi$x : A)B}}
\LeftLabel{\IT{($\mu$)} \quad}
\BinaryInfC{\IT{NM = NM' : B}}
\end{bprooftree}

\bigskip

\noindent
\begin{bprooftree}
\hskip -0.5pt
\AxiomC{\IT{M = N : A}}
\LeftLabel{\IT{($\sigma$)} \quad}
\UnaryInfC{\IT{N = M : A}}
\end{bprooftree}
\begin{bprooftree}
\hskip 63pt
\AxiomC{\IT{N : A}}
\AxiomC{\IT{M = M' : ($\Pi$x : A)B}}
\LeftLabel{\IT{($\upsilon$)} \quad}
\BinaryInfC{\IT{ MN = M'N : B}}
\end{bprooftree}

\bigskip

\noindent
\begin{bprooftree}
\hskip -0.5pt
\AxiomC{\IT{M = N : A}}
\AxiomC{\IT{ N = P : A}}
\LeftLabel{\IT{($\tau$)} \quad}
\BinaryInfC{\IT{M = P : A}}
\end{bprooftree}

\bigskip

\noindent
\begin{bprooftree}
\hskip -0.5pt
\AxiomC{\IT{M : ($\Pi$x : A)B}}
\LeftLabel{\IT{($\eta$)} \quad}
\RightLabel {\IT{(x $\notin$ FV(M))}}
\UnaryInfC{\IT{($\lambda$x.Mx) = M : ($\Pi$x : A)B}}
\end{bprooftree}

\bigskip

\noindent One can check the original $\lambda$$\beta$$\eta$-equality axioms and rules of inference in \cite{lambda}. An important operation not mentioned in these axioms is the change of bound variables. This operation is denoted by $\alpha$ in the literature \cite{lambda}.

\begin{mydef}[Computational path \cite{Ruy1}] 
\normalfont Let  \IT{a} and \IT{b} be elements of a type \IT{A}. A computational path from \IT{a} to \IT{b} is a sequence of rewrites and substitutions, i.e., a sequence of definitional equalities, that starts from \IT{a} and arrives at \IT{b}.
\end{mydef}

\begin{notation}

\noindent \normalfont We use the notation \IT{a =\SB{t} b : A} to indicate that \IT{t} is a computational path between the elements \IT{a} and \IT{b} of type \IT{A}.
\end{notation}

\noindent The definition of a computational path is the reason why we needed to introduce these equality axioms. A rewrite can be identified by the equality axiom (or $\alpha$ operation) that originated it. Thus, a computational path (or sequence of rewrites) can be understood as a sequence of equality identifiers. A path \IT{t} is sometimes called the \IT{rewrite reason}, since it is the reason that justifies the equality of elements \IT{a} and \IT{b} in  \IT{a =\SB{t} b : A} \cite{Ruy1}.
\par Using this new entity, the sequence of rewrites, it is possible to give a new formalization to the identity type. From \IT{a =\SB{t} b : A}, we derive the element \IT{t(a,b)} (indicating that \IT{t} is a sequence of rewrites between \IT{a} and \IT{b}, both elements of the type \IT{A}). Furthermore, this element \IT{t(a,b)} is of the type \IT{Id\SB{A}(a,b)}, \IT{t(a,b)} : \IT{Id\SB{A}(a,b)}. That way, we conclude the following introduction rule for the identity type:

\begin{prooftree}
\AxiomC{\IT{a =\SB{t} b : A}}
\UnaryInfC{\IT{t(a,b) : Id\SB{A}(a,b)}}
\end{prooftree}

\noindent Using computational paths, we can make a complete formalization of the identity type, including rules such as the \IT{Id-elimination} and \IT{Id-induction}. One can check this formalization, together with the advantages in terms of simplicity, in the thorough formulation given in \cite{Ruy1}.

\subsection{Equality Properties}

Three equality rules will be essential to the categorical interpretation \cite{Ruy1}:

\begin{prooftree}
\hskip -180pt
\AxiomC{\IT{a =\SB{t} b : A}}
\AxiomC{\IT{b =\SB{u} c : A}}
\LeftLabel{\IT{transitivity}}
\BinaryInfC{ \IT{a =\SB{$\tau$(t,u)} c : A}}
\end{prooftree}

\begin{prooftree}
\hskip -248pt
\AxiomC{\IT{a : A}}
\LeftLabel{\IT{reflexivity}}
\UnaryInfC{\IT{a =\SB{$\rho$} a : A}}
\end{prooftree}
\begin{prooftree}
\hskip -240pt
\AxiomC{\IT{a =\SB{t} b : A}}
\LeftLabel{\IT{symmetry}}
\UnaryInfC{\IT{b =\SB{$\sigma$(t)} a : A}}
\end{prooftree}




\subsection{Term Rewriting System}

This topic is essential to understand the upcoming categorical aspects of sequences of rewrites. Let's start with some examples:

\begin{example}
\noindent \normalfont Consider the path  \IT{a =\SB{t} b : A}. By the symmetric property, we obtain  \IT{b =\SB{$\sigma$(t)} a : A}. What if we apply the property again on the path \IT{$\sigma$(t)}? We would obtain a path  \IT{a =\SB{$\sigma$($\sigma$(t))} b : A}. Since we applied the inversion twice in sucession, we obtained a path that is equivalent to the initial path \IT{t}. For that reason, we conclude the act of applying the symmetry twice in succession is a redundancy. We say that the path \IT{$\sigma$($\sigma$(t))} can be reduced to the path \IT{t}.
\end{example}

\begin{example}
\noindent \normalfont  Consider the reflexive path \IT{a =\SB{$\rho$} a : A}. Since it is a trivial path, the symmetric path \IT{a =\SB{$\sigma$($\rho$)} a : A} is equivalent to the initial one. For that reason, the application of the symmetry on the reflexive path is a redundancy. The path \IT{$\sigma$($\rho$)} can be reduced to the path \IT{$\sigma$}.

\end{example}
\begin{example}
\noindent \normalfont Consider the path   \IT{a =\SB{t} b : A} and the inverse path \IT{b =\SB{$\sigma$(t)} a : A}. We can apply the transitive property in these paths, obtaining \IT{a =\SB{$\tau$(t,$\sigma$(t))} a : A}. Since the paths are inversions of each other, the transitive path is equivalent to the trivial path \IT{$\rho$}. Therefore, this transitive application is a redundancy. The path \IT{$\tau$(t,$\sigma$(t))} can be reduced to the trivial path \IT{$\rho$}.
\end{example}

\noindent The redundancies found in these examples were pretty simple and straightforward. Nonetheless, some redundancies can be quite complex. In this work, we will only care about the transitivity, reflexivity and symmetry ones. There exist a lot more redundancies derived from the other equality axioms though. The system with the rules that resolve all these redundancies is called \IT{LND\SB{EQ}-TRS}. In the sequel we will take the time to show the rules of this system which we will need to use to obtain the categorical interpretation. Nevertheless, one can check the full system together with a thorough explanation of each rule in \cite{Ruy1}.

\begin{mydef}[\IT{rw}-rule] 
\normalfont A \IT{rw}-rule is any of the rules defined in \IT{LND\SB{EQ}-TRS}. 
\end{mydef}

\begin{mydef}[\IT{rw}-contraction]
\normalfont  Let \IT{a} and \IT{b} be computational paths. We say that  \IT{a $\rhd$\SB{1rw} b} (read as: \IT{a} \IT{rw}-contracts to \IT{b})  iff we can obtain \IT{b} from \IT{a} by an application of only one \IT{rw}-rule.
\end{mydef} 

\begin{mydef}[\IT{rw}-reduction] 
\normalfont Let \IT{a} and \IT{b} be computational paths. We say that \IT{a $\rhd$\SB{rw} b} (read as: \IT{a} \IT{rw}-reduces to \IT{b})  iff we can obtain \IT{b} from \IT{a} by a finite number of applications of \IT{rw}-contractions.
\end{mydef}

\begin{mydef}[\IT{rw}-equality]
\normalfont  Let \IT{a} and \IT{b} be computational paths.  We say that \IT{a =\SB{rw} b} (read as: \IT{a} is \IT{rw}-equal to \IT{b}) iff \IT{b} can be obtained from \IT{a} by a finite (perhaps empty) series of \IT{rw}-contractions and reversed \IT{rw}-contractions. In other words, \IT{a =\SB{rw} b} iff there exists a sequence \IT{ R\SB{0},....,R\SB{n}}, with $n \geq 0$, such that

\centering \IT{($\forall$i $\leq$ n - 1) (R\SB{i}$\rhd$\SB{1rw} R\SB{i+1} or R\SB{i+1} $\rhd$\SB{1rw} R\SB{i})}

\centering \IT{ R\SB{0} $\equiv$ \IT{a}, \quad R\SB{n} $\equiv$ \IT{b}}
\end{mydef}

\begin{prop} \normalfont \IT{rw}-equality is transitive, symmetric and reflexive.
\end{prop}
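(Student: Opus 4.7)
The plan is to unfold the definition of $=_{rw}$ and show that each of the three closure properties follows from a direct manipulation of the witnessing sequences $R_0,\ldots,R_n$. The key observation, which makes the whole argument almost mechanical, is that the local condition in the definition, namely that for each $i$ either $R_i \rhd_{1rw} R_{i+1}$ or $R_{i+1} \rhd_{1rw} R_i$, is \emph{already symmetric} in $R_i$ and $R_{i+1}$; this is what will let us reverse sequences freely.

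For reflexivity, given any computational path $a$, I would exhibit the trivial sequence $R_0 \equiv a$ of length $n=0$. The universal quantifier $(\forall i \leq n-1)$ is vacuous, and the endpoint conditions $R_0 \equiv a$ and $R_n \equiv a$ hold, so $a =_{rw} a$. For symmetry, suppose $a =_{rw} b$ is witnessed by $R_0,\ldots,R_n$ with $R_0 \equiv a$ and $R_n \equiv b$. I would consider the reversed sequence $R'_i := R_{n-i}$ for $0 \leq i \leq n$. Then $R'_0 \equiv b$ and $R'_n \equiv a$, and for each $i \leq n-1$ the pair $(R'_i, R'_{i+1}) = (R_{n-i}, R_{n-i-1})$ satisfies the local disjunction precisely because the original pair $(R_{n-i-1}, R_{n-i})$ did, the disjunction being unchanged under swapping its two arguments.

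For transitivity, suppose $a =_{rw} b$ is witnessed by $R_0,\ldots,R_n$ and $b =_{rw} c$ by $S_0,\ldots,S_m$, with $R_n \equiv b \equiv S_0$. I would concatenate the two sequences into $T_0,\ldots,T_{n+m}$ defined by $T_i := R_i$ for $i \leq n$ and $T_i := S_{i-n}$ for $n \leq i \leq n+m$; the overlap at $i = n$ is consistent because $R_n \equiv S_0$. The endpoint conditions $T_0 \equiv a$ and $T_{n+m} \equiv c$ hold by construction, and the local rw-contraction condition holds on each consecutive pair because such a pair lies entirely within one of the two original sequences.

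None of the three steps presents a real obstacle; the proposition is essentially a sanity check confirming that the relation defined by ``contract, or reverse-contract, finitely many times'' is indeed the equivalence closure of $\rhd_{1rw}$. The only place where one has to be mildly careful is in reading the $n=0$ case of the definition correctly, so that reflexivity is not accidentally excluded, and in making sure that the empty-range universal quantifier in the concatenation at $i=n$ is handled by the identification $R_n \equiv S_0$ rather than requiring a fresh rw-step.
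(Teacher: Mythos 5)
Your proposal is correct and follows essentially the same route as the paper's proof: the trivial one-element sequence for reflexivity, concatenation of witnessing sequences for transitivity, and reversal of the sequence for symmetry, using the fact that the local condition ``$R_i \rhd_{1rw} R_{i+1}$ or $R_{i+1} \rhd_{1rw} R_i$'' is symmetric in the pair. You merely spell out the index bookkeeping more explicitly than the paper does.
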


\begin{proof}
The reflexivity is straightfoward. For any sequence of rewrites \IT{x}, just take the trivial sequence \IT{x =\SB{rw} x}. For transitivity, for sequences  \IT{x =\SB{rw} y} and \IT{y =\SB{rw} z}, just concatenate the two to obtain the transitive sequence \IT{x =\SB{rw} z}. Since in a \IT{rw}-equality sequence every step can be an inverse \IT{rw}-contraction, it is natural that if \IT{x =\SB{rw} y}, then \IT{y =\SB{rw} x}. To understand this fact, just see that a contraction in the sequence  \IT{x =\SB{rw} y} will be an inverse contraction in  \IT{y =\SB{rw} x} and an inverse contraction in \IT{x =\SB{rw} y} will be a contraction in  \IT{y =\SB{rw} x}.
\end{proof}

\noindent One important fact is that \IT{LND\SB{EQ}-TRS} has the terminating property, i.e., given any path \IT{s}, we can always find a \IT{t} free of redundancies such that  \IT{s $\rhd$\SB{rw} t}. Another important property is that  \IT{LND\SB{EQ}-TRS} is confluent, i.e., if  \IT{s $\rhd$\SB{rw} n} and  \IT{s $\rhd$\SB{rw} m}, \IT{n $\neq$ m}, there always exists a \IT{t} such that  \IT{m $\rhd$\SB{rw} t} and  \IT{n $\rhd$\SB{rw} t}. One can check the proof of these properties in \cite{Anjo1,Ruy2,Ruy3}.

\subsection{The Relevant \IT{rw}-Rules}
	
In the last section, we introduced the system \IT{LND\SB{EQ}-TRS}. Giving a detailed explanation of each \IT{rw}-rule is not one of the objectives of this work. Nonetheless, the transitive (\IT{$\tau$}), reflexive (\IT{$\rho$}) and symmetric (\IT{$\sigma$}) rules are essential to our future results. In this section, we show all the rules that will be used in this work:

\begin{itemize}
\item Rules involving $\sigma$ and $\rho$




\IT{$\sigma$($\rho$) $\rhd$\SB{sr} $\rho$   \quad $\sigma$($\sigma$(r)) $\rhd$\SB{ss} r }

\item Rules involving $\tau$






\IT{$\tau$(r,$\sigma$(r)) $\rhd$\SB{tr} $\rho$} \quad \IT{$\tau$($\sigma$(r),r) $\rhd$\SB{tsr} $\rho$} \quad \IT{$\tau$(r,$\rho$) $\rhd$\SB{trr} r} \quad \IT{$\tau$($\rho$,r) $\rhd$\SB{tlr} r}


\item Rule involving $\tau$ and $\tau$




\IT{$\tau$($\tau$(t,r),s) $\rhd$\SB{tt} $\tau$(t, $\tau$(r,s))}

\end{itemize}

\noindent One can check the natural deductions that originate each rule in \IT{appendix A}. 


\section{Categorical Interpretation}

As already mentioned, the main result of this work is the interpretation of sequences of rewrites using the categorical point-of-view. Here comes the main points of such an interpretation.

\subsection{Basic Concepts}

To fully understand the interpretation, some fundamental concepts need to be defined. These are: categories, isomorphism between arrows, and groupoids.

\begin{mydef} [Category \cite{Steve2}]
\normalfont A category is the structure with the following elements and rules:

\begin{itemize}
\item Objects: Generally represented by capital letters (e. g. \IT{A, B)}. The set of all objects is called \IT{C\SB{0}}.
\item Arrows: Generally represented by function letters (e. g. \IT{f, g}). Arrows are morphisms between the objects. The set of all arrows is called  \IT{C\SB{1}}.
\item Domain and codomain: Given an arrow \IT{f : A$\rightarrow$  B}, the domain of \IT{f} is \IT{dom(f) = A} and the codomain is \IT{cod(f) = B}.
\item Compositions: Given arrows  \IT{f : A$\rightarrow$  B} and  \IT{g : B$\rightarrow$  C}, there always exists an arrow \IT{g $\circ$ f : A $\rightarrow$ C}. Such arrow is called the composite of \IT{f} and \IT{g}.
\item  Identity arrow: For each object \IT{A} there exists an arrow \IT{1\SB{A}: A $\rightarrow$ A} called the identity arrow.
\item Rule of Associativity: Given arrows \IT{f : A$\rightarrow$  B}, \IT{g : B$\rightarrow$  C} and \IT{h : C$\rightarrow$  D}, we always have that \IT{h $\circ$ (g $\circ$ f) = (h $\circ$ g) $\circ$ f}.
\item Unit Rule: For all arrow \IT{f : A$\rightarrow$  B}, \IT{f $\circ$ 1\SB{A} = f = 1\SB{B} $\circ$  f}.
\end{itemize}

\end{mydef}

\noindent Any structure that has this data and obeys these rules is considered a category. One simple example is the structure that the objects are sets and the arrows are functions. This category is known as \BF{Sets}.

\begin{mydef}[Arrow isomorphism \cite{Steve2}] 
\normalfont Let \IT{f : A$\rightarrow$  B} be an arrow of any category \BF{C}. \IT{f} is called an isomorphism if there exists a \IT{g: B$\rightarrow$  A $\in$ \IT{C\SB{1}}} such that \IT{g $\circ$ f = 1\SB{A}} and  \IT{f $\circ$ g = 1\SB{B}}. \IT{g} is called the inverse of \IT{f} and can be written as \IT{f\SP{-1}}.
\end{mydef}

\noindent Note that this definition generalizes the concept of isomorphism in Mathematics. For the category of \BF{Sets}, an isomorphism is just a bijective function, since every bijective function has an inverse.

\begin{mydef}[Categorical Groupoid \cite{group}]
\normalfont A groupoid is a category such that every arrow is an isomorphism.
\end{mydef}

\noindent We took the care to indicate that we are giving the categorical definition for the concept of groupoid. This care arose due to the fact that groupoids can be defined in a purely algebraic way.  Furthermore, the algebraic definition was already explored using type theory and the identity type, as can be seen in \cite{Streicher}.

\subsection{The Interpretation}

With the definitions proposed in the last section, we can give the categorical interpretation.

\begin{mydef}[\IT{A\SB{rw}}] \normalfont Let \IT{A\SB{rw}} be the structure composed by objects and arrows with the following interpretation:

\begin{itemize}

\item Objects: The objects of \IT{A\SB{rw}} are elements \IT{a} of the type \IT{A}, i.e. \IT{a : A}.
\item Arrows: An arrow  \IT{s : a$\rightarrow$  b} between elements \IT{a,b} of the type \IT{A} is a sequence of rewrite between these elements. In other words, there exists an arrow \IT{s} between elements \IT{a} and \IT{b} iff there exists a computational path \IT{s} such that \IT{a =\SB{s} b}.
\end{itemize}

\end{mydef}

\begin{prop} \normalfont \IT{A\SB{rw}} has a weak categorical structure.
\end{prop}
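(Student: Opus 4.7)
The plan is to verify each clause in the definition of a category for the structure $A_{rw}$, noting that the defining equations will only hold up to $rw$-equality of paths rather than as strict equalities — this is precisely the sense in which the structure is \emph{weak}. For each of the four category axioms (composition, identity, associativity, unit laws), I would produce the required morphism using one of the equality axioms of Section~2.1 and then, for the axioms phrased as equations, appeal to the appropriate rule from the list in Section~2.4.

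First I would define composition. Given arrows $s : a \to b$ and $t : b \to c$, which by definition correspond to computational paths $a =_s b : A$ and $b =_t c : A$, the transitivity rule from Section~2.2 yields $a =_{\tau(s,t)} c : A$, so setting $t \circ s := \tau(s,t)$ supplies a composite arrow $a \to c$ in $A_{rw}$. Next, for the identity arrow on an object $a : A$, I would use the reflexivity rule to obtain $a =_\rho a : A$ and set $1_a := \rho$.

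With composition and identities in hand, I would turn to the equations. For associativity, given $f : a \to b$, $g : b \to c$, $h : c \to d$, the two composites in question are $h \circ (g \circ f) = \tau(f, \tau(g, h))$ and $(h \circ g) \circ f = \tau(\tau(f, g), h)$, and these are identified precisely by the rule $\tau(\tau(t,r),s) \rhd_{tt} \tau(t, \tau(r,s))$, so they coincide up to $rw$-equality. For the unit laws applied to $f : a \to b$, the composites $f \circ 1_a = \tau(\rho, f)$ and $1_b \circ f = \tau(f, \rho)$ reduce to $f$ by the rules $\tau(\rho, r) \rhd_{tlr} r$ and $\tau(r, \rho) \rhd_{trr} r$ respectively, again giving $rw$-equality rather than strict equality.

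The main subtlety, and the reason for the adjective \emph{weak}, will be the observation that the equations of a category are being interpreted up to the equivalence relation $=_{rw}$ on arrows, justified by Proposition~2.4.1 which shows that $=_{rw}$ is indeed an equivalence relation. A strict category could then be obtained by passing to the quotient $A_{rw}/{=_{rw}}$, but the point of the present proposition is to record that the bare term-level structure already satisfies the axioms once one allows $rw$-equality in place of identity. The proof is therefore essentially a catalogue of the relevant $rw$-rules; the only real work is in being precise about which notion of equality is being used at each step.
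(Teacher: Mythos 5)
Your proposal is correct and follows essentially the same route as the paper: composition via $\tau$, identity via $\rho$, associativity via the $\rhd_{tt}$ rule, unit laws via $\rhd_{tlr}$ and $\rhd_{trr}$, with all equations holding only up to $rw$-equality, which is exactly the paper's reason for calling the structure weak. The only addition is your remark about quotienting by $=_{rw}$ to obtain a strict category, which the paper does not pursue but which is consistent with its argument.
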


\begin{proof}
We need to define composition, the identity arrow and show that these elements follow the associative and unit rules (in a weak sense), respectively:

\begin{itemize}
\item Compositions: Given arrows (paths)  \IT{s : a$\rightarrow$  b} and  \IT{r : b$\rightarrow$  c}, we need to find an arrow \IT{t: a$\rightarrow$ c} such that \IT{t = r $\circ$ s}. To do that, we first need to define the meaning of a composition of paths. Since a sequence of rewrites has the transitive axiom, it is natural to define the composition as an application of the transitivity, i.e.\ \IT{t = r $\circ$ s = $\tau$(s,r) }. Therefore, for any \IT{s : a$\rightarrow$  b} and  \IT{r : b$\rightarrow$  c}, we always have a \IT{t = r $\circ$ s = $\tau$(s,r) }.

\item Associativity of the composition: Given arrows \IT{s : a$\rightarrow$  b}, \IT{r : b$\rightarrow$  c} and \IT{t : c$\rightarrow$  d}, we need to conclude that \IT{t $\circ$ (r $\circ$ s) = (t $\circ$ r) $\circ$ s}. Substituting the compositions by the correspoding transitivities, we write:

\begin{center}
\IT{$\tau$($\tau$(s,r),t) = $\tau$(s,$\tau$(r,t))}
\end{center}

 \noindent This equality is clearly justified by the  \IT{tt rw}-rule (in a weak sense that will  be explained soon):

\begin{center}
\centering \IT{$\tau$($\tau$(s,r),t)  $\rhd$\SB{tt}  $\tau$(s,$\tau$(r,t))}
\end{center}

\item Identity: For any object \IT{a}, consider the reflexive path \IT{a =\SB{$\rho$} a} as the identity arrow \IT{1\SB{a}}. Let's call this path \IT{$\rho$\SB{a}} (to indicate that it is a reflexive path of \IT{a}).

\item Unit rule: For any arrow \IT{s: a$\rightarrow$ b}, we need to show that \IT{s $\circ$ 1\SB{a} = s = 1\SB{b} $\circ$ s}. This can be showed (also in a weak sense) with a straightforward application of \IT{rw}-rules:
\begin{center}
\IT{s $\circ$ 1\SB{a} = s $\circ$ $\rho$\SB{a} = $\tau$($\rho$\SB{a},s)  $\rhd$\SB{tlr} s}

\IT{1\SB{b} $\circ$ s = $\rho$\SB{b} $\circ$ s  = $\tau$(s,$\rho$\SB{b})  $\rhd$\SB{trr} s}
\end{center}

\end{itemize}

\noindent An interesting question is whether \IT{rw}-reduction really implies equality in the general sense. For example, does \IT{$\tau$($\tau$(s,r),t)  $\rhd$\SB{tt}  $\tau$(s,$\tau$(r,t))} really implies that \IT{$\tau$($\tau$(s,r),t) = $\tau$(s,$\tau$(r,t))}? The answer is \BF{no}. In fact, the reduction implies that the two sides are equal up to \BF{\IT{rw}-equality}. It means that these equations do not hold "on the nose", i.e., in the usual sense of equality, but hold if we consider equality in the sense of \IT{rw}-equality. For that reason, since the equality does not hold in the more general sense, we call this a \BF{weak} categorical structure.
\end{proof}

\noindent In fact,  to hold up to \IT{rw}-equality is the same as saying that it holds up to \BF{propositional equality}. The reason for that, is that every \IT{rw}-rule creates an element (i.e., a proof) for a specific identity type. An example is the   \IT{$\rhd$\SB{tt}}-rule, which creates a proof for \IT{Id\SB{Id\SB{A}(x,z)}($\tau$($\tau$(t,r),s),$\tau$(t,$\tau$(r,s)))}. One can check \cite{Ruy1} for more details.

One interesting fact is that these weak categorical structures are very common in mathematics. In fact, the homotopical interpretation of the identity type as a topological path generates a weak structure too, since paths are associative only up to homotopy.

\begin{prop} \normalfont \IT{A\SB{rw}} has a weak groupoidal structure.
\end{prop}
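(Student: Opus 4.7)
The plan is to upgrade the weak categorical structure of $A_{rw}$ established in the previous proposition to a weak groupoidal structure by exhibiting, for each arrow, a two-sided inverse in the sense of \IT{rw}-equality. By the definition of a categorical groupoid, it suffices to show that every arrow $s : a \rightarrow b$ in $A_{rw}$ is an isomorphism; the underlying category structure (with composition given by $\tau$ and identities given by $\rho$) is already available from the previous result, so I only need to supply inverses and check the two triangle equations.

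First I would fix an arbitrary arrow $s : a \rightarrow b$, i.e.\ a computational path $a =_{s} b : A$, and propose its inverse to be the arrow induced by the symmetry rule, namely $\sigma(s) : b \rightarrow a$, which is well-defined as a sequence of rewrites by the \textit{symmetry} rule from the \textit{Equality Properties} subsection. The candidate inverse is then $s^{-1} := \sigma(s)$.

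Next I would verify the two required equations using the composition convention $g \circ f = \tau(f,g)$ introduced in the previous proposition and the identities $1_{a} = \rho_{a}$, $1_{b} = \rho_{b}$. Concretely, unfolding the definitions gives
\begin{center}
\IT{$s^{-1} \circ s = \sigma(s) \circ s = \tau(s,\sigma(s))$} \quad and \quad \IT{$s \circ s^{-1} = s \circ \sigma(s) = \tau(\sigma(s),s)$}.
\end{center}
The \IT{tr}-rule gives \IT{$\tau(s,\sigma(s)) \rhd_{tr} \rho$} and the \IT{tsr}-rule gives \IT{$\tau(\sigma(s),s) \rhd_{tsr} \rho$}, so both composites reduce to the corresponding identity arrows $\rho_{a}$ and $\rho_{b}$. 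These two rules are exactly two of the rules listed in \textit{The Relevant rw-Rules}, so no additional machinery is needed.

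The only real subtlety is the same one that forced the previous proposition to be phrased weakly: the reductions $\rhd_{tr}$ and $\rhd_{tsr}$ do not give equality on the nose, but only \IT{rw}-equality, which by Proposition~2.2.1 is an equivalence relation and so is well-behaved enough to serve as the ambient notion of equality. Hence the isomorphism equations $s \circ s^{-1} = 1_{b}$ and $s^{-1} \circ s = 1_{a}$ hold up to \IT{rw}-equality, in the same weak sense used for associativity and the unit laws, so $A_{rw}$ is a weak groupoid. I do not expect a genuine obstacle here; the main thing to be careful about is to state clearly that ``inverse'' is meant in the weak sense, mirroring the reading of ``category'' adopted in the previous proposition, and to match the reduction $\rhd$ against the equality $=$ by appealing to \IT{rw}-equality exactly as was done for the \IT{tt}, \IT{tlr}, and \IT{trr} rules above.
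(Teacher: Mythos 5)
Your proposal is correct and matches the paper's own proof essentially step for step: you take $\sigma(s)$ as the inverse of $s$, unfold the composites to $\tau(s,\sigma(s))$ and $\tau(\sigma(s),s)$, reduce them to $\rho_{a}$ and $\rho_{b}$ via the \IT{tr} and \IT{tsr} rules, and correctly note that the isomorphism equations hold only up to \IT{rw}-equality, yielding a weak groupoidal structure. No gaps.
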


\begin{proof}

We need to show that every arrow is an isomorphism. To do that, for every arrow \IT{s: a$\rightarrow$ b} we need to show a \IT{t: b$\rightarrow$ a} such that \IT{t $\circ$ s = 1\SB{a}} and  \IT{s $\circ$ t = 1\SB{b}}. To do that, recall that every rewrite \IT{s} has an inverse rewrite \IT{$\sigma$(s)}. Put \IT{t = $\sigma$(s)}. Thus:
\begin{center}

\IT{s $\circ$ t = s $\circ$ $\sigma$(s) = $\tau$($\sigma$(s),s) $\rhd$\SB{tsr} $\rho$\SB{b}}

\IT{t $\circ$ s = $\sigma$(s) $\circ$ s  = $\tau$(s,$\sigma$(s)) $\rhd$\SB{tr} $\rho$\SB{a}}
\end{center}

\noindent This is the same situation discussed in the previous proposition. The equalities only hold up to \IT{rw}-equality. For that reason, we cannot say that they are an isomorphism in the strict sense. Instead, we call this an equivalence. We say that \IT{A\SB{rw}} has a weak groupoidal structure.

\end{proof}

\noindent With these results, we conclude that a type \IT{A}, which is represented by the category  \IT{A\SB{rw}}, has a weak groupoidal structure. This result gives a mathematical meaning to sequences of rewrites that is in accordance with the known results obtained using the usual definition of identity type. Nevertheless, we want to go further and make a brief analysis of a possible higher structure 
for the sequence of rewrites.  Since we have rewrites between elements of the same type, one could think of rewrites between rewrites. In fact, we already have these, the \IT{rw}-rules! This way, we can consider a structure in which the objects are rewrites of rewrites and the arrows are the morphisms between them:

\begin{mydef}[\IT{A\SB{2rw}}] \normalfont \IT{A\SB{2rw}(x,y)} is the following structure:

\begin{itemize}
\item Objects: The objects are the morphisms of \IT{A\SB{rw}} between objects \IT{x} and \IT{y}, i.e., the sequence of rewrites starting at \IT{x} and ending at \IT{y}.
\item Arrows: A rewrite of a sequence of rewrites, as we already know, is just the application of \IT{rw}-rules. That way, there exists an arrow \IT{$\theta$ : s $\rightarrow$ t} iff \IT{s =\SB{rw} t}, where s and t are paths of \IT{A\SB{rw}}.
\end{itemize}

\noindent It is important to note that every pair of objects \IT{x} and \IT{y} of \IT{A\SB{rw}} generates a structure \IT{A\SB{2rw}(x,y)}.

\end{mydef}

\begin{prop} \normalfont \IT{A\SB{2rw}(x,y)} has a weak categorical structure.
\end{prop}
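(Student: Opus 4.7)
The plan is to mirror the proof of the previous proposition (the weak categorical structure of $A_{rw}$), but one level up: the role previously played by the $rw$-rules of \IT{LND\SB{EQ}-TRS} at the level of paths will now be played by the reflexivity, transitivity, and symmetry of $=\SB{rw}$ itself, which were already established in Proposition 1 (\IT{rw}-equality is an equivalence relation). So Proposition 1 furnishes exactly the constructors that the category axioms demand.

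First I would define composition: given two arrows $\theta : s \rightarrow t$ and $\phi : t \rightarrow u$ in \IT{A\SB{2rw}(x,y)}, the definition provides witnesses that $s =\SB{rw} t$ and $t =\SB{rw} u$. Concatenating the two \IT{rw}-equality sequences, as in the transitivity clause of Proposition 1, yields $s =\SB{rw} u$, which by the definition of \IT{A\SB{2rw}(x,y)} gives the arrow $\phi \circ \theta : s \rightarrow u$. Next I would define the identity arrow $1\SB{s} : s \rightarrow s$ for any object $s$ by taking the empty \IT{rw}-equality sequence, as provided by the reflexivity clause of Proposition 1.

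For associativity and the unit law I would simply observe that composition here is concatenation of \IT{rw}-equality sequences. Concatenation is associative and has the empty sequence as unit, but only after we agree that two \IT{rw}-equality sequences which differ by reparenthesisation, or by the insertion/removal of empty segments, are to be regarded as representing the same arrow. Under this identification the laws $(\psi \circ \phi) \circ \theta = \psi \circ (\phi \circ \theta)$ and $\theta \circ 1\SB{s} = \theta = 1\SB{t} \circ \theta$ hold. As in the previous proposition, this is exactly the sense in which the structure is \BF{weak}: the category axioms are not strict equalities of raw syntactic data, but equalities up to an equivalence on the morphisms (a higher-dimensional analogue of what \IT{rw}-equality did for paths in \IT{A\SB{rw}}).

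The main obstacle, and the only subtle point, is pinning down what equality of two arrows of \IT{A\SB{2rw}(x,y)} means, since the definition only says an arrow \emph{exists} whenever $s =\SB{rw} t$. I expect the cleanest treatment is to follow the same convention adopted in the previous proposition and read all equations of arrows up to the natural notion of equivalence on \IT{rw}-equality sequences (concatenation-associativity, cancellation of empty steps, and cancellation of a step immediately followed by its inverse). Once this convention is in place, the verification of the axioms is an immediate consequence of Proposition 1, and nothing new beyond the earlier case has to be proved.
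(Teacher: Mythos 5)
Your proof is essentially correct and uses the same raw ingredients as the paper --- composition of arrows via the transitivity of \IT{rw}-equality (concatenation of sequences), identity via its reflexivity (the empty sequence), and the observation that associativity and the unit laws hold only up to an identification of morphisms --- but you formalize the ``up to'' part differently. The paper's proof does not introduce an ad hoc quotient on concatenation sequences; instead it observes that, since \IT{rw}-equality is itself transitive, symmetric and reflexive (Proposition 2.1), the very same redundancies that motivated the \IT{rw}-rules reappear one level up, and the same natural deduction derivations of Appendix A, with equality replaced by \IT{rw}-equality, yield a new system of \IT{rw\SB{2}}-rules (\IT{$\rhd$\SB{tt\SB{2}}}, \IT{$\rhd$\SB{tlr\SB{2}}}, \IT{$\rhd$\SB{trr\SB{2}}}, etc.) and hence a notion of \IT{rw\SB{2}}-equality; the categorical laws are then stated as reductions \IT{$\tau$($\tau$($\theta$,$\alpha$),$\psi$) $\rhd$\SB{tt\SB{2}} $\tau$($\theta$,$\tau$($\alpha$,$\psi$))}, \IT{$\tau$($\rho$\SB{s},$\theta$) $\rhd$\SB{tlr\SB{2}} $\theta$}, \IT{$\tau$($\theta$,$\rho$\SB{t}) $\rhd$\SB{trr\SB{2}} $\theta$}, holding up to \IT{rw\SB{2}}-equality. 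Your explicitly described equivalence (reparenthesisation, removal of empty segments, cancellation of inverse pairs) is in substance the same relation, so nothing in your argument fails; what the paper's formulation buys is uniformity: the \IT{rw\SB{2}}-rules are produced by the same mechanism as the \IT{rw}-rules, which is exactly what lets the paper iterate the construction to \IT{rw\SB{n}}-rules and prove the analogous statements for \IT{A\SB{nrw}} ``analogously,'' and what identifies the weakness at each level with the morphisms of the next level in the concluding \IT{$\omega$}-groupoid discussion. Your more concrete quotient is self-contained for the $n=2$ case but would have to be re-invented at each higher level.
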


\begin{proof}

Remember that in \BF{proposition 2.1} we already proved that \IT{rw}-equality is transitive, symmetric and reflexive. We also know that the usual equality has these properties. We discussed that some redundancies arise from these properties and are resolved by rules that we called \IT{rw}-rules. Thus, we can say that a \IT{rw}-rule is a rule that removes redundancies generated by the properties of equality. All the \IT{rw}-rules used in this work were derived from the properties of transitivity, symmetry and reflexivity of the equality. Since \IT{rw}-equality also has these properties, we can use an analogous process to obtain redundancies involving \IT{rw}-equality. Moreover, we can obtain analogous rules that remove these redundancies (just change the equality for \IT{rw}-equality in the natural deductions of \IT{appendix A}). These rules are called \IT{rw\SB{2}}-rules. (Notation: \IT{$\rhd$\SB{tt\SB{2}}, $\rhd$\SB{tr\SB{2}}, $\rhd$\SB{tlr\SB{2}}}, etc). Since we are working with \IT{rw\SB{2}}-rules, we can think of contractions, reductions and, of course, \IT{rw\SB{2}}-equality. These definitions are analogous to the usual definitions used for the \IT{rw}-rules. With that, we have the following relations (Suppose \IT{$\theta$, $\alpha$} and \IT{$\psi$} as arrows of  \IT{A\SB{2rw}(x,y)} such that the composition \IT{$\psi$ $\circ$ ($\alpha$ $\circ$ $\theta$)} makes sense and \IT{$\theta$: s $\rightarrow$ t}):

\begin{center}

\IT{$\tau$($\tau$($\theta$,$\alpha$),$\psi$)  $\rhd$\SB{tt\SB{2}}  $\tau$($\theta$,$\tau$($\alpha$,$\psi$))} 

\IT{$\tau$($\rho$\SB{s},$\theta$) $\rhd$\SB{tlr\SB{2}} $\theta$} \quad \IT{$\tau$($\theta$,$\rho$\SB{t}) $\rhd$\SB{trr\SB{2}} $\theta$} 

\end{center}

\noindent This case is similar to what we discussed in \IT{A\SB{rw}}, in the sense that the categorical equalities do not hold in the general sense. They hold up to \BF{\IT{rw\SB{2}}-equality}. For this reason, we conclude that this is a weak categorical structure.

\end{proof}

\begin{prop} \normalfont \IT{A\SB{2rw}(x,y)} has a weak groupoidal structure.
\end{prop}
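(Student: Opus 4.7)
The plan is to mimic, one level up, the proof that $A_{rw}$ is a weak groupoid, using the $rw_2$-rules introduced in the previous proposition in place of the $rw$-rules. Since the weak categorical structure on $A_{2rw}(x,y)$ has already been established (with composition interpreted as $\tau$ applied to $rw$-equalities and identity interpreted as the reflexive $rw$-equality $\rho_s$ on an object $s$), it only remains to supply, for each arrow, a candidate inverse and verify the two inverse laws up to $rw_2$-equality.

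First, I would fix an arrow $\theta \colon s \to t$ in $A_{2rw}(x,y)$; by definition this means $s =_{rw} t$. Proposition~2.1 shows that $rw$-equality is symmetric, so we obtain $t =_{rw} s$, i.e.\ an arrow $\sigma(\theta) \colon t \to s$. I would then propose $\sigma(\theta)$ as the inverse of $\theta$, exactly paralleling the choice $t = \sigma(s)$ used in the proof of Proposition~3.2.

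Next I would verify the two triangle identities up to $rw_2$-equality. Unfolding the definition of composition, the two required equations are
\begin{center}
\IT{$\sigma(\theta) \circ \theta = \tau(\theta,\sigma(\theta)) \rhd_{tr_2} \rho_s$} \quad and \quad \IT{$\theta \circ \sigma(\theta) = \tau(\sigma(\theta),\theta) \rhd_{tsr_2} \rho_t$,}
\end{center}
which are the direct $rw_2$-analogues of the $tr$ and $tsr$ rules used in Proposition~3.2. These analogues are precisely the rules guaranteed by the construction in Proposition~3.3, obtained by replacing equality with $rw$-equality in the natural deductions of Appendix~A.

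The main thing to be careful about is not the algebra (which is essentially a cut-and-paste of the one-level-lower argument), but the conceptual step of justifying that these $rw_2$-level rules really exist and that the equalities must be read up to $rw_2$-equality rather than on the nose, exactly as in the $A_{rw}$ case. Thus $A_{2rw}(x,y)$ is a weak groupoid in the same sense in which $A_{rw}$ is: every arrow has a two-sided inverse modulo the next layer of rewrite-equality.
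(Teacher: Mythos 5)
Your proposal is correct and follows exactly the paper's own route: the paper simply declares the proof analogous to Proposition~3.2 with each \emph{rw}-rule replaced by its \emph{rw}\textsubscript{2}-counterpart, and you have spelled out precisely that substitution, taking $\sigma(\theta)$ as the inverse and verifying both composites reduce to the reflexive arrows via the \emph{tr}\textsubscript{2} and \emph{tsr}\textsubscript{2} rules, up to \emph{rw}\textsubscript{2}-equality. (One terminological quibble: the two inverse equations are the isomorphism conditions, not ``triangle identities.'')
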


\begin{proof}
This proof is analogous to \BF{proposition 3.2}. We just need to change the \IT{rw}-rule for the corresponding \IT{rw\SB{2}}-rule. That way, the equality will hold up to \IT{rw\SB{2}}-equality and the arrows will be weak isomorphisms, i.e., equivalences.
\end{proof}
 
\noindent What if we want to go further? Instead of a rewrite of rewrites, we could think of a rewrite of rewrite of rewrites (or even more). Since a morphism of a rewrite of rewrites is a \IT{rw}-equality, we could think of a morphism between a rewrite of rewrite of rewrites as being a \IT{rw\SB{2}}-equality. In fact, we can do this process for any \IT{n}. For this, we just need to define a \IT{rw\SB{n}}-rule:

\begin{mydef}[\IT{rw\SB{n}}-rule] \normalfont For \IT{n $\geq$ 2}, a \IT{rw\SB{n}}-rule is a rule that removes a redundancy generated by \IT{rw\SB{n-1}}-equality. For \IT{n = 1}, just use the definition of \IT{rw}-rule.
\end{mydef}

\begin{prop} \normalfont \IT{rw\SB{n}}-equality is transitive, symmetric and reflexive.
\end{prop}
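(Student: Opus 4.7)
The plan is to argue by induction on $n \geq 1$, mirroring the proof of Proposition 2.1 at each level. The base case $n = 1$ is literally Proposition 2.1, so all the real content of the argument lies in the inductive step.

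For the inductive step, I would first spell out the definitions that the paper leaves implicit at level $n$, by direct analogy with Definitions 3.4, 3.5 and 3.7: an \emph{$rw_n$-contraction} $a \rhd_{1rw_n} b$ is an application of a single $rw_n$-rule; an \emph{$rw_n$-reduction} is a finite composition of such contractions; and $a =_{rw_n} b$ holds iff there exists a (possibly empty) sequence $R_0,\dots,R_k$ with $R_0 \equiv a$, $R_k \equiv b$, and, for each $i < k$, either $R_i \rhd_{1rw_n} R_{i+1}$ or $R_{i+1} \rhd_{1rw_n} R_i$. In other words, $=_{rw_n}$ is the reflexive-symmetric-transitive closure of $\rhd_{1rw_n}$.

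Once the definition is in place, the three properties are verified in exactly the same way as in Proposition 2.1. Reflexivity follows from the trivial one-element sequence $R_0 \equiv a$, which witnesses $a =_{rw_n} a$. Transitivity follows by concatenating a witness for $a =_{rw_n} b$ with a witness for $b =_{rw_n} c$, identifying the shared middle entry. Symmetry follows by reversing a witnessing sequence $R_0,\dots,R_k$: each step which was an $rw_n$-contraction becomes an inverse $rw_n$-contraction and vice versa, and the definition permits both kinds of step, so the reversed sequence witnesses $b =_{rw_n} a$.

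The step I expect to demand the most care is not the equivalence-relation argument itself, which is a purely combinatorial fact about the reflexive-symmetric-transitive closure of any binary relation, but rather the justification that Definition 3.6 produces a well-founded recursion. Specifically, stating what counts as an $rw_n$-rule presupposes that $rw_{n-1}$-equality makes sense and behaves sufficiently like an equivalence relation for the notion of ``redundancy generated by $rw_{n-1}$-equality'' to be meaningful; this is exactly what the inductive hypothesis supplies. Apart from this bookkeeping, no substantial obstacle arises, and the argument is essentially a verbatim lift of the proof of Proposition 2.1 to each new level.
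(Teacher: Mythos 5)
Your proof is correct and follows essentially the same route as the paper, which simply declares the result ``analogous to Proposition 2.1'' using $rw_n$-contractions in place of $rw$-contractions; your verification of reflexivity, transitivity and symmetry via the trivial sequence, concatenation, and reversal is exactly the paper's argument at level one, lifted. Your additional remark that one must check the recursion defining $rw_n$-rules is well-founded is a useful piece of bookkeeping the paper glosses over, but it does not change the approach.
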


\begin{proof}
Analogous to \BF{proposition 2.1}. Too see that, just remember that the definition \IT{rw\SB{n}}-equality is similar to \IT{rw}-equality, but using \IT{rw\SB{n}}-contractions (or inverse contractions) in each step.
\end{proof}

\noindent One of our objectives is to generalize the structure  \IT{A\SB{nrw}}. Remember that every pair of objects \IT{x} and \IT{y} of \IT{A\SB{rw}} generates a structure  \IT{A\SB{2rw}(x,y)}. If we take a pair of objects \IT{$\theta$} and \IT{$\alpha$} from \IT{A\SB{2rw}(x,y)}, we can generate a structure \IT{A\SB{3rw}($\theta$,$\alpha$)\SB{(x,y)}}. If we continue this reasoning, we can think of a general structure  \IT{A\SB{nrw}}. But before we do that, let us simplify the notation. Looking at \IT{A\SB{3rw}($\theta$,$\alpha$)\SB{(x,y)}}, we can note that the notation starts to get complicated, due to the fact the structure at each level is generated by a pair of objects of the previous one. The results that we want to achieve is independent of the choice of these objects. For that reason, we will just write \IT{A\SB{nrw}}. Nevertheless, one has to have in mind the fact that, when we write \IT{A\SB{nrw}}, we are talking about a structure generated by a pair of objects of \IT{A\SB{(n-1)rw}}, which, in its turn, is generated by a pair of objects of \IT{A\SB{(n-2)rw}}, etc. In fact, there exists a mathematical structure that captures that idea, known as globular set. Nonetheless, we will not formally define this structure in this work, since it is not needed to achieve our results. One can check the formal definition of globular set (and its many applications) in \cite{Tom}.

\begin{mydef}[\IT{A\SB{n}\SB{rw}}] \normalfont for \IT{n $\geq$ 2}, \IT{A\SB{nrw}} is the following structure:

\begin{itemize}

\item Objects: The objects are the morphisms between a pair of objects of \IT{A\SB{(n-1)rw}}.
\item Arrows: There exists an arrow \IT{$\theta$ : s $\rightarrow$ t} iff \IT{s =\SB{rw\SB{n-1}} t}, where s and t are objects of \IT{A\SB{nrw}}

\end{itemize}

\noindent For \IT{ n = 1}, just take the usual definition of \IT{A\SB{rw}}.

\end{mydef}

\begin{prop} \normalfont \IT{A\SB{nrw}} has a weak categorical structure
\end{prop}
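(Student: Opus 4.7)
The plan is to proceed by induction on $n \geq 1$, mirroring the constructions already used for $A_{rw}$ and $A_{2rw}(x,y)$. The base case $n = 1$ is \textbf{Proposition 3.1}, where composition is $\tau$, the identity on $a$ is the reflexive path $\rho_a$, and associativity together with the unit laws hold up to $rw$-equality via the rules $\rhd_{tt}$, $\rhd_{tlr}$ and $\rhd_{trr}$. For the inductive step, I would assume the claim for $A_{(n-1)rw}$ and construct the categorical data for $A_{nrw}$ in the same style.

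The key input is \textbf{Proposition 3.5}: $rw_{n-1}$-equality is transitive, symmetric and reflexive. Applying the recipe sketched in \textbf{Proposition 3.3}, the redundancies generated by these three properties are resolved by $rw_n$-rules, in particular the analogues $\rhd_{tt_n}$, $\rhd_{tlr_n}$ and $\rhd_{trr_n}$. The objects of $A_{nrw}$ are morphisms of $A_{(n-1)rw}$, and an arrow $\theta : s \rightarrow t$ is a witness of $s =_{rw_{n-1}} t$. I would then define composition of $\theta : s \rightarrow t$ and $\alpha : t \rightarrow u$ by $\alpha \circ \theta = \tau(\theta, \alpha)$, and take the identity arrow on $s$ to be the reflexive $rw_{n-1}$-equality $\rho_s$.

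Associativity for three composable arrows $\theta, \alpha, \psi$ then follows from $\tau(\tau(\theta,\alpha),\psi) \rhd_{tt_n} \tau(\theta,\tau(\alpha,\psi))$, and the unit laws from $\tau(\rho_s,\theta) \rhd_{tlr_n} \theta$ and $\tau(\theta,\rho_t) \rhd_{trr_n} \theta$. Exactly as in \textbf{Propositions 3.1} and \textbf{3.3}, these equations do not hold on the nose but only up to $rw_n$-equality, which is precisely the sense in which the resulting categorical structure is \emph{weak}.

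The main obstacle is bookkeeping rather than mathematics: one must be confident that the mechanism which produces $rw_n$-rules from the equality properties at level $n-1$ genuinely iterates, so that the abstract transitivity, symmetry and reflexivity of $rw_{n-1}$-equality furnished by \textbf{Proposition 3.5} really yield the specific contractions $\rhd_{tt_n}$, $\rhd_{tlr_n}$ and $\rhd_{trr_n}$ invoked above (which amounts to checking that the natural deductions of \textit{appendix A} can be relativised to $rw_{n-1}$-equality in place of $=$). Once this uniform construction of rules at each level is granted, the verification of the categorical axioms is line-for-line the argument already given for $A_{2rw}(x,y)$.
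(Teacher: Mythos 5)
Your proof follows essentially the same route as the paper: it invokes the transitivity, symmetry and reflexivity of \textit{rw}\textsubscript{n-1}-equality (Proposition 3.5) to obtain the \textit{rw}\textsubscript{n}-rules resolving the corresponding redundancies, and then runs the argument of Proposition 3.3 with composition given by $\tau$, identities by the reflexive paths, and the categorical equations holding only up to \textit{rw}\textsubscript{n}-equality. Your version is in fact somewhat more explicit than the paper's (which simply declares the proof "analogous to Proposition 3.3"), and your closing caveat about verifying that the rule-generating mechanism genuinely iterates is a fair flag of the one step the paper also leaves informal.
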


\begin{proof}
 Using the fact that \IT{rw\SB{n - 1}}-equality (if \IT{n = 1}, consider \IT{rw\SB{0}}-equality as the usual equality)  is transitive, symmetric and reflexive, we have \IT{rw\SB{n}}-rules that resolve transitive, symmetric and reflexive redundancies, analogous to the \IT{$\rhd$\SB{tt}, $\rhd$\SB{tr}, $\rhd$\SB{tsr}} rules. That way, this proof is analogous to the one of \BF{proposition 3.3}. Again, it is important to emphasize that the equalities will hold up to \BF{\IT{rw\SB{n}}-equality}, resulting in a weak structure.
\end{proof}

\begin{prop} \normalfont  \IT{A\SB{nrw}} has a weak groupoidal structure
\end{prop}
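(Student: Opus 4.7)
The plan is to mirror the argument of Proposition 3.2 (and its 2-dimensional analogue, Proposition 3.4) one dimension higher, exploiting the fact, just established in the preceding proposition, that \IT{rw\SB{n-1}}-equality is transitive, symmetric, and reflexive. Because of that, the recipe used to derive the \IT{rw}-rules \IT{$\rhd$\SB{tr}} and \IT{$\rhd$\SB{tsr}} from the transitive/symmetric/reflexive structure of ordinary equality applies verbatim at level \IT{n}, yielding \IT{rw\SB{n}}-rules \IT{$\rhd$\SB{tr\SB{n}}} and \IT{$\rhd$\SB{tsr\SB{n}}} that collapse a transitivity of a path with its inverse to the corresponding reflexive path.

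With these rules in hand, I would proceed as follows. Fix an arrow \IT{$\theta$ : s $\rightarrow$ t} in \IT{A\SB{nrw}}. By definition this means \IT{s =\SB{rw\SB{n-1}} t}, and since \IT{rw\SB{n-1}}-equality is symmetric (Proposition 3.5 applied at level \IT{n-1}), we also have \IT{t =\SB{rw\SB{n-1}} s}, giving an arrow \IT{$\sigma$($\theta$) : t $\rightarrow$ s}. Setting this as the candidate inverse, I would then compute, using the composition (transitivity) and identity (reflexivity) interpretations inherited from Proposition 3.6:
\begin{center}
\IT{$\theta$ $\circ$ $\sigma$($\theta$) = $\tau$($\sigma$($\theta$),$\theta$)  $\rhd$\SB{tsr\SB{n}}  $\rho$\SB{t}} \quad and \quad \IT{$\sigma$($\theta$) $\circ$ $\theta$ = $\tau$($\theta$,$\sigma$($\theta$))  $\rhd$\SB{tr\SB{n}}  $\rho$\SB{s}.}
\end{center}
These are exactly the two equations required for \IT{$\sigma$($\theta$)} to witness that \IT{$\theta$} is an isomorphism, except that both hold only up to \BF{\IT{rw\SB{n}}-equality}, just as in the lower-dimensional cases. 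As in Propositions 3.2 and 3.4, this obliges us to speak of equivalences rather than strict isomorphisms, and hence of a \emph{weak} groupoidal structure.

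The main obstacle, and the only thing requiring any care beyond the direct transcription of the earlier proofs, is justifying that the required \IT{rw\SB{n}}-rules actually exist for every \IT{n}. This is what Definition 3.5 and Proposition 3.5 were set up to give us: the former ensures that \IT{rw\SB{n}}-rules are defined precisely as the rules that resolve redundancies generated by \IT{rw\SB{n-1}}-equality, and the latter supplies the transitive/symmetric/reflexive properties that such redundancies presuppose. Consequently, no new rewriting-theoretic work is needed at each level, and the proof reduces to invoking the inductive hypothesis on the properties of \IT{rw\SB{n-1}}-equality and performing the two contractions displayed above, completing the argument.
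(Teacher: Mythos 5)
Your proposal is correct and follows exactly the route the paper intends: the paper's own proof is literally the one line ``Analogous to the proof of Proposition 3.4,'' and what you have written is precisely that analogy unwound one level up, with the candidate inverse $\sigma(\theta)$ and the two contractions $\tau(\sigma(\theta),\theta) \rhd_{tsr_{n}} \rho_{t}$ and $\tau(\theta,\sigma(\theta)) \rhd_{tr_{n}} \rho_{s}$ holding up to $rw_{n}$-equality. Your only slip is a harmless one of bookkeeping (the definition of the $rw_{n}$-rules is Definition 3.6, not 3.5); the mathematical content matches the paper.
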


\begin{proof}
Analogous to the proof of \BF{proposition 3.4}.
\end{proof}

\noindent With that, we concluded the interpretation of every structure \IT{A\SB{nrw}}. Nevertheless, this work is just the initial part that needed to be done to achieve even more important results involving this interpretation in  the future. One of these results is to formalize all these structures in one big and complex higher dimensional structure. We discussed that if we get a pair of objects of any structure \IT{A\SB{nrw}}, we generate a new structure \IT{A\SB{(n+1)rw}}. As we can see, this process is combinatorial. We also proved that, if we look separately at each structure, the groupoidal and categorical equalities hold up to the corresponding \IT{rw}-equality. If we look closely, this \IT{rw}-equality is, in fact, the \BF{morphism} of the next level! If we take all those \IT{A\SB{nrw}} (all up to infinity), we could try to glue them together in a unique multidimensional combinatorial structure, where at each level, the equalities hold up to the morphism of the next one. In fact, a structure with these properties already exists in mathematics. It is known as \BF{weak \IT{$\omega$}-groupoid}. Gluing all these structures to build this higher dimensional one is a hard task though. The reason for that is the fact that a weak \IT{$\omega$}-groupoid has higher dimensional compositions that are hard to define and these compositions have additional laws that need to be checked. In addition, the structure needs to obey other laws that are also hard to define and check, called \BF{coherence laws}. For that reason, an attempt to do that is outside the scope of this work. Nonetheless, one of the results of this work is that it can be understood as the initial step towards the construction of this higher structure, using sequences of rewrites as the fundamental concept.


\section{Conclusion}

This work was motivated by the addition of a new entity in type theory, the sequence of rewrites \cite{Ruy5}. Although a sequence of rewrites has a clear computational definition, it was lacking a mathematical interpretation. Believing that the addition of this new entity greatly reduces the complexity of type theory, we proposed a mathematical formalization for a sequence of rewrites. With that, our objective was to show that a sequence of rewrites is not just a computational entity without any mathematical meaning. In fact, we showed that it has a formal interpretation in a well-established theory of mathematics, i.e.\ Category Theory. Therefore, we achieved our objective of developing a denotational semantics for this new concept.

We have obtained some results involving the categorical structure that was derived from our interpretation. We concluded that our interpretation is in accordance with some known results, such as types have a groupoidal structure. We have also interpreted the meaning of a rewrite of a sequence of rewrites. Motivated by that, we went further and examined complicated structures involving rewrites of objects that are also rewrites of other objects. We have generalized this structure, showing that \IT{A\SB{nrw}} has a weak groupoidal structure. In the end, we made a connection between these structures with the so-called weak \IT{$\omega$}-groupoid. This has opened the way, in a future work, for an alternative proof of the relevant fact that types are weak \IT{$\omega$}-groupoids \cite{Benno}, using the interpretation given here.

\newpage


\newpage
\bibliographystyle{plain}
\bibliography{Biblio}


\newpage

\appendix
\section{Natural Deduction Derivations} \label{App:AppendixA}

In this section, we show natural deductions for each \IT{rw}-rule used in this work (i.e., the rules presented in \BF{section 2.4}). All deductions have been taken from \cite{Ruy1}:

\begin{itemize}
\item Rules involving $\sigma$ and $\rho$

\begin{prooftree}
\AxiomC{\IT{x =\SB{$\rho$} x : A}}
\RightLabel{\quad \IT{$\rhd$\SB{sr} \quad x =\SB{$\rho$} x : A}}
\UnaryInfC{\IT{x =\SB{$\sigma$($\rho$)} x : A}}
\end{prooftree}

\begin{prooftree}
\AxiomC{\IT{x =\SB{r} y : A}}
\UnaryInfC{\IT{y =\SB{$\sigma$(r)} x : A}}
\RightLabel{\quad \IT{$\rhd$\SB{ss} \quad x =\SB{r} y : A}}
\UnaryInfC{\IT{x =\SB{$\sigma$($\sigma$(r))} y : A}}
\end{prooftree}

From these deductions, we obtain the following rules:

\IT{$\sigma$($\rho$) $\rhd$\SB{sr} $\rho$ }

\IT{$\sigma$($\sigma$(r)) $\rhd$\SB{ss} r }
\bigskip

\item Rules involving $\tau$

\begin{prooftree}
\AxiomC{\IT{x =\SB{r} y : A}}
\AxiomC{\IT{y =\SB{$\sigma$(r)} x : A}}
\RightLabel{\quad \IT{$\rhd$\SB{tr} \quad x =\SB{$\rho$} x : A}}
\BinaryInfC{\IT{x =\SB{$\tau$(r,$\sigma$(r))} x : A}}
\end{prooftree}

\begin{prooftree}
\AxiomC{\IT{y =\SB{$\sigma$(r)} x : A}}
\AxiomC{\IT{x =\SB{r} y : A}}
\RightLabel{\quad \IT{$\rhd$\SB{tsr} \quad y =\SB{$\rho$} y : A}}
\BinaryInfC{\IT{y =\SB{$\tau$($\sigma$(r),r)} y : A}}
\end{prooftree}

\begin{prooftree}
\AxiomC{\IT{x =\SB{r} y : A}}
\AxiomC{\IT{y =\SB{$\rho$} y : A}}
\RightLabel{\quad \IT{$\rhd$\SB{trr} \quad x =\SB{r} y : A}}
\BinaryInfC{\IT{x =\SB{$\tau$(r,$\rho$)} y : A}}
\end{prooftree}

\begin{prooftree}
\AxiomC{\IT{x =\SB{$\rho$} x : A}}
\AxiomC{\IT{x =\SB{r} y : A}}
\RightLabel{\quad \IT{$\rhd$\SB{tlr} \quad x =\SB{r} y : A}}
\BinaryInfC{\IT{x =\SB{$\tau$($\rho$),r} y : A}}
\end{prooftree}

Obtained Rules:

\IT{$\tau$(r,$\sigma$(r)) $\rhd$\SB{tr} $\rho$}

\IT{$\tau$($\sigma$(r),r) $\rhd$\SB{tsr} $\rho$}

\IT{$\tau$(r,$\rho$) $\rhd$\SB{trr} r}

\IT{$\tau$($\rho$,r) $\rhd$\SB{tlr} r}

\bigskip

\item Rule involving $\tau$ and $\tau$

\begin{prooftree}
\hskip - 155pt
\AxiomC{\IT{x =\SB{t} y : A}}
\AxiomC{\IT{y =\SB{r} w : A}}
\BinaryInfC{\IT{x =\SB{$\tau$(t,r)} w : A}}
\AxiomC{\IT{w =\SB{s} z : A}}
\BinaryInfC{\IT{ x =\SB{$\tau$(($\tau$,r),s)} z : A}}
\end{prooftree}

\begin{prooftree}
\hskip 4cm
\AxiomC{\IT{x =\SB{t} y : A}}
\AxiomC{\IT{y=\SB{r} w : A}}
\AxiomC{\IT{w=\SB{s} z : A}}
\BinaryInfC{\IT{y =\SB{$\tau$(r,s)} z : A}}
\LeftLabel{\IT{$\rhd$\SB{tt}}}
\BinaryInfC{\IT{x =\SB{$\tau$(t,$\tau$(r,s))} z : A}}
\end{prooftree}

Obtained rule:

\IT{$\tau$($\tau$(t,r),s) $\rhd$\SB{tt} $\tau$(t, $\tau$(r,s))}

\end{itemize}

\end{document}